\newtheorem{theorem}{Theorem}
\newtheorem{lemma}{Lemma}
\newtheorem{proposition}{Proposition}
\newcommand{\figcaption}[1]{\def\@captype{figure}\caption{#1}}
\newcommand{\tblcaption}[1]{\def\@captype{table}\caption{#1}}
\newcommand{\bF}{\mathbb{F}}
\newcommand{\cB}{\mathcal{B}}
\newcommand{\cF}{\mathcal{F}}
\newcommand{\cI}{\mathcal{I}}
\newcommand{\cS}{\mathcal{S}}
\newcommand{\cU}{\mathcal{U}}
\newcommand{\cZ}{\mathcal{Z}}
\let\oldComment\Comment
\renewcommand{\Comment}[1]{\oldComment{\textcolor{gray}{#1}}}
\title{A Faster Deterministic Algorithm for Mader's $\mathcal{S}$-Path Packing}
\author{
Satoru Iwata\thanks{
    Department of Mathematical Informatics, 
    The University of Tokyo, Tokyo 113-8656, Japan, and  
    Institute for Chemical Reaction Design and Discovery, Hokkaido University, Sapporo, Hokkaido 001-0021, Japan. 
    \href{mailto:iwata@mist.i.u-tokyo.ac.jp}{iwata@mist.i.u-tokyo.ac.jp}
  }
  \and 
  Hirota Kinoshita\thanks{
    Department of Mathematical Informatics, 
    The University of Tokyo, Tokyo 113-8656, Japan. 
    \href{mailto:hirotak@g.ecc.u-tokyo.ac.jp}{hirotak@g.ecc.u-tokyo.ac.jp}
  }
}
\date{}
\begin{document}

\maketitle

\begin{abstract}
  Given an undirected graph $G = (V, E)$ with a set of terminals $T\subseteq V$ partitioned into a family $\cS$ of disjoint blocks, find the maximum number of vertex-disjoint paths whose endpoints belong to two distinct blocks while no other internal vertex is a terminal.
  This problem is called Mader's $\cS$-path packing. It has been of remarkable interest as a common generalization of the non-bipartite matching and vertex-disjoint $s\text{-}t$ paths problem. 
  
  This paper presents a new deterministic algorithm for this problem via known reduction to linear matroid parity. 
  The algorithm utilizes the augmenting-path algorithm of Gabow and Stallmann (1986), while replacing costly matrix operations between augmentation steps with a faster algorithm that exploits the original $\cS$-path packing instance. 
  The proposed algorithm runs in $O(mnk)$ time, where $n = |V|$, $m = |E|$, and $k = |T|\le n$. This improves on the previous best bound $O(mn^{\omega})$ for deterministic algorithms, where $\omega\ge2$ denotes the matrix multiplication exponent.
\end{abstract}

\section{Introduction}\label{sec:intro}

Let $G = (V, E)$ be an undirected graph of $n = |V|$ vertices and $m = |E|$ edges.
We are given a set $T\subseteq V$ of $k = |T|$ \textit{terminals} as well as a partition $\cS$ of $T$ into non-empty \textit{blocks}.
A path on $G$ is said to be a $\cS$-path if and only if each endpoint is a terminal in a distinct block from the other and no internal vertex is a terminal.
A $\cS$-path packing is a set of vertex-disjoint $\cS$-paths.
Mader's $\cS$-path packing problem is to find a $\cS$-path packing of the maximum possible size.

This problem was raised by Gallai~\cite{gallai1961maximum}, generalizing both the non-bipartite matching problem (when $\cS = \{\{v\}\mid v\in V\}$) and the vertex-disjoint $R$-$S$ paths problem (when $\cS = \{R,S\}$ for mutually disjoint terminal sets $R,S\subseteq V$). 
For an equivalent form of the openly disjoint $T$-paths problem, Mader~\cite{mader1978die} established a good characterization by proving a min-max theorem. Lov{\'a}sz~\cite{lovasz1978matroid,lovasz1980matroid} then showed a reduction to matroid matching and developed the first polynomial-time algorithm. A linear representation of this reduction is described in Schrijver~\cite{schrijver2003combinatorial}, allowing for more efficient algorithms via linear matroid parity. Schrijver~\cite{schrijver2001short} also provided a short inductive proof of the min-max theorem of Mader. 

The fastest known deterministic algorithm to Mader's problem entirely relies on reduction to linear matroid parity as well.
Despite the reduced instance having a special structure,
the running time of this algorithm is $O(mn^{\omega})$ directly implied by
Gabow and Stallmann's \textit{augmenting-path} algorithm~\cite{gabow1986augmenting},
which has been the fastest deterministic solution for general linear matroid parity problems.
Note that $\omega\ge 2$ denotes the matrix multiplication exponent with the current best bound of $\omega < 2.371552$ due to \cite{williams2024new}.

We overcome this shortcoming and present a new deterministic algorithm for Mader's problem. 
Our approach circumvents matrix inversion and multiplication used to create and update an auxiliary graph in the Gabow--Stallmann algorithm~\cite{gabow1986augmenting}. Instead, we update the auxiliary graph using combinatorial techniques, which exploit an interpretation of the parity solution as a subgraph of $G$ in the original instance of Mader's problem. The proposed algorithm thus involves only simple arithmetic operations over a finite field $\bF_q$ of order $q=O(|\cS|)$.  Throughout this paper, we adopt the uniform-cost model of computation, i.e., any single arithmetic operation over $\bF_q$ is performed in constant time.
\begin{theorem}\thlabel{thm:main}
  There exists a deterministic algorithm that solves Mader's $\cS$-path packing problem in $O(mnk)$ time.
\end{theorem}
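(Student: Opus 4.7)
The plan is to follow the two-level structure suggested in the introduction: use the known reduction (as presented in Schrijver~\cite{schrijver2003combinatorial}) from Mader's $\cS$-path packing to a linear matroid parity instance over $\bF_q$ with $q=O(|\cS|)$, and then run the Gabow--Stallmann augmenting-path algorithm~\cite{gabow1986augmenting} as the outer loop. Since each $\cS$-path contributes two terminal endpoints, the number of augmentations is at most $k/2$, so if I can show that each augmentation costs $O(mn)$ time, the total is $O(mnk)$.

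First I would fix notation for the parity instance: the ground set of pairs is identified with vertices/edges of $G$, and a feasible parity solution of cardinality $2\ell$ corresponds to a subgraph $H\subseteq G$ realising a partial $\cS$-packing of size $\ell$ together with some auxiliary ``dual'' edges. This correspondence is the combinatorial interpretation of the parity solution alluded to in the introduction. With this dictionary in hand, the state that Gabow--Stallmann maintains between augmentations can be stored as $H$ (plus a few $O(n)$-size labels) rather than as the dense matrix inverse that their generic implementation keeps.

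Next I would examine the per-augmentation work. In Gabow--Stallmann an augmentation consists of (i) (re)building an auxiliary digraph $D$ whose arcs encode ``swap feasibility'' between pairs, (ii) running a path search on $D$, and (iii) performing the swaps. Step (ii) is already $O(n+|E(D)|)$ and step (iii) is cheap; the bottleneck is step (i), which in~\cite{gabow1986augmenting} uses matrix inversion and multiplication at cost $O(n^\omega)$. The heart of the proof is to show that an arc of $D$ can be tested (or, after a swap, updated) by a local combinatorial operation on $H$ in the original graph $G$, so that the whole auxiliary structure can be produced in $O(mn)$ time: roughly $O(m)$ for each of $O(n)$ ``pivots'' that the search explores. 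The key ingredient will be a lemma stating that the linear-algebraic arc condition coincides with the existence of a short alternating walk in $H\cup G$, analogous to blossom contraction in non-bipartite matching.

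The main obstacle will be precisely this reinterpretation step: translating the determinantal swap condition from~\cite{gabow1986augmenting} into a purely graph-theoretic criterion that remains valid over $\bF_q$ with $q=O(|\cS|)$, and arguing that after a swap the relevant pieces of $D$ can be re-derived incrementally without redoing $\Omega(n)$ linear-algebraic work. Once that characterization is in place, the rest is bookkeeping: maintain $H$, its block labels, and a few vertex potentials in $O(m+n)$ space, rebuild the local part of $D$ after each pivot in $O(m)$, and repeat the process $O(k)$ times. Multiplying the counts yields the claimed $O(mnk)$ bound and, since every arithmetic operation is over $\bF_q$ with $q=O(|\cS|)\le O(k)$, the algorithm is deterministic under the uniform-cost model adopted in the introduction.
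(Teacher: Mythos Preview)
Your high-level scaffold matches the paper's: reduce to linear matroid parity over $\bF_q$ with $q=O(|\cS|)$, initialize so that at most $\lfloor k/2\rfloor$ augmentations occur, and argue each augmentation costs $O(mn)$. Where you diverge is in \emph{how} the $O(mn)$ per-augmentation bound is obtained, and your route has a genuine gap.

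You propose to reinterpret the coefficient/arc condition of Gabow--Stallmann as a purely combinatorial criterion (``existence of a short alternating walk in $H\cup G$'') and then update the auxiliary digraph incrementally after each pivot. You correctly flag this reinterpretation as ``the main obstacle,'' and it is not supplied: over the small field $\bF_q$ the dependence coefficients involve the block labels $\theta(\cdot)$ and the edge signs $\mu_e(\cdot)$, so there is no a priori reason the vanishing of a coefficient admits a field-independent walk characterization. You would also have to carry blossoms, transforms, and the rest of the Gabow--Stallmann internals through this dictionary. As written, the proposal is a plan whose central lemma is only conjectured.

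The paper avoids this obstacle entirely. It uses the Gabow--Stallmann search as a black box that, \emph{given the full dependence matrix $D(B)$}, performs one augmentation in $O(mn)$ time (Proposition~\ref{prop:aug}). The only new ingredient is that $D(B)$ can be recomputed \emph{from scratch} in $O(mn)$ time before every augmentation. This works because any feasible base $B$ induces a spanning forest $G[B]$ in which each tree contains one or two terminals (Lemmas~\ref{lem:indep-char} and~\ref{lem:base-char}); a tree traversal rooted at each terminal (Algorithm~\ref{alg:depmat}) propagates the representation of every vertex-twin with respect to $B$ in $O(n)$ time per vertex, and each row of $D(B)$ is then a two-term sum of these. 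No incremental updates, no combinatorial reinterpretation of arc conditions, no opening of the Gabow--Stallmann box.

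So the missing idea in your proposal is not the alternating-walk dictionary you sketch, but the much simpler observation that the forest structure of $G[B]$ makes the dependence matrix cheap to recompute outright.
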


\subsection{Related work}

\paragraph{Randomized algorithms} 
Extending the algebraic algorithms for bipartite matching by  Sankowski~\cite{sankowski2006weighted}, Harvey~\cite{harvey2009algebraic} provided efficient randomized algebraic algorithms for matching and linear matroid intersection.  Subsequently, Cheung, Lau, and Leung~\cite{cheung2014algebraic} extended this approach to the linear matroid parity problem. A direct application of the resulting algorithm to Mader's $\cS$-paths packing problem runs in $O(mn^{\omega-1})$ time. Furthermore, they improved their randomized algorithm to run in $O(n^\omega)$ time, faster than ours for dense graphs.

\paragraph{Combinatorial algorithms}
Subsequently to the min-max theorem, considerable efforts has been made to obtain efficient combinatorial algorithms.
For the non-zero $A$-path packing problem on group-labeled graphs, which generalizes Mader's $\cS$-paths packing problem, Chudnovsky, Cunningham, and Geelen~\cite{chudnovsky2008algorithm} provided an $O(n^5)$-time combinatorial algorithm. This algorithm yields the Edmonds--Gallai type structural decomposition independently of the prior work of Seb{\H{o}} and Szeg{\H{o}}~\cite{sebHo2004path}.
Furthermore, Pap~\cite{pap2008packing} devised a polynomial-time combinatorial algorithm based on a different type of graph decomposition, which generalizes the odd ear-decomposition of factor-critical graphs~\cite{lovasz1972note}.

\paragraph{Mader matroid}

It is known that Mader's $\mathcal{S}$-path problem has a more intrinsic matroid interpretation, known as \textit{Mader matroid}~\cite{schrijver2003combinatorial}. 
The Mader matroid is a matroid $(T, \mathcal{I})$ on the ground set $T$ where each basis is the set of endpoints of a distinct maximum $\cS$-path packing, that is,
\begin{align}
  \mathcal{I} = \{T'\subseteq T \mid \mbox{There exists a $\cS$-path packing that covers $T'$.}\}.
\end{align}
Schrijver's question of whether every Mader matroid is a gammoid~\cite{schrijver2001each, schrijver2003combinatorial} has been positively resolved by Pap~\cite{pap2006mader}, which hence implies a linear representation of the Mader matroid~\cite{mason1972class}.
However, the representation requires as large a field as $2^{|T|}$, 
and it is still unknown if there exists a deterministic algorithm to obtain the representation in polynomial time. 

\paragraph{Mader delta-matroid}

A recent work~\cite{wahlstrom2024representative} establishes the linear delta-matroid structure in Mader's $\mathcal{S}$-path problem, called the \textit{Mader delta-matroid}, in parallel with the relation between the matching matroid and the matching delta-matroid.
Specifically, the Mader delta-matroid is defined as the delta-matroid $(T,\cF)$ with
\begin{align}
  \cF = \{T' \subseteq T \mid \mbox{The exists a $\cS$-path packing whose endpoints are $T'$.} \}.
\end{align}
This interpretation also allows them to show that any instance $(G,T,\cS)$ can be converted to a \textit{mimicking} instance $(\tilde G,T,\cS)$ with $|V(\tilde G)| = O(|T|^3)$.
However, they rely on randomized algorithms to compute the skew-symmetric matrix representing the Mader delta-matroid, involving proper instantiation of variables in the Tutte matrix.
Efficient algorithms to find the maximum feasible set of a linear delta-matroid either require such a representation explicitly~\cite{koana2024faster} or use the \textit{separation oracle}, which is also computationally hard for the Mader delta-matroid, in a greedy approach~\cite{bouchet1987greedy,chandrasekaran1988pseudomatroids}.

\paragraph{Variants/Generalizations of Mader's problem}

Prior to Mader's theorem, Gallai~\cite{gallai1961maximum} showed the min-max theorem for the special case of $\cS = \{\{t\}\mid t\in T\}$ known as $T$-path packing, which contributes to the $O(k^2)$-vertex kernel for feedback vertex set~\cite{thomasse2010kernel}.
The edge-disjoint companion of Mader's problem is solved by a combinatorial algorithm~\cite{iwata2023finding} in $O(m^2)$ time, faster than handling line graphs in Mader's problem.
Mader's problem and its variants has been generalized to weighted graphs in many forms~\cite{karzanov1993edge,karzanov1997multiflows,pap2011polynomial,hirai2014tree}, among which Yamaguchi~\cite{yamaguchi2016shortest} established the first polynomial-time algorithm for the minimum weight $\cS$-path packing via reduction to weighted linear matroid parity~\cite{iwata2021weighted,pap2013weighted}.
Other extended notions include non-returning $A$-paths~\cite{pap2007packing,pap2008packing} as well as non-zero $A$-paths in group-labelled graphs~\cite{chudnovsky2006packing,chudnovsky2008algorithm,tanigawa2016packing,yamaguchi2016packing}.
Half-integral, or more generally, node-capacitated $\cS$-path packing has also been ddressed~\cite{pap2007some,pap2008strongly,babenko2009fast,hirai2018dual}, playing a key role in many FPT algorithms~\cite{iwata2018csps}.
\section{Reduction to linear matroid parity}\label{sec:reduce}

It suffices to consider the case when the graph $G$ is connected, as the general graph can be handled in each connected component;
this would not increase the total running time from what \autoref{thm:main} guarantees for the original instance.
We also assume $|\cS|\ge 2$; otherwise the problem would be trivial.
As is common in the literature, we denote
$E[X, Y]\coloneqq \{\{x,y\}\in E \mid x\in X,y\in Y\}$ for any $X,Y\subseteq V$. In particular, for each $X\subseteq V$, 
we denote $E[X]\coloneqq E[X, X]$ and let $G[X] = (X, E[X])$ denote the subgraph induced by $X$.

We use the following simplified form of Schrijver's linear representation~\cite{schrijver2003combinatorial}. A similar formulation is used in~\cite{cheung2014algebraic}.
Let $q$ be the least prime number that satisfies $|\cS|<q\leq 2\,|\cS|$. Note that $q$ is odd as $|\cS|\geq 2$. Let $\bF_q$ be the finite field of order $q$ so that each terminal $t\in S\in\cS$ can be labelled with a scalar $\theta(t)\in\bF_q$ uniquely to its block $S$ i.e.,
\begin{align}
  \theta(t) = \theta(t') \quad&\Leftrightarrow\quad (\{t,t'\}\subseteq S,\ \exists S\in\cS), 
  &\forall t,t'\in T.
\end{align}
Each edge $e\in E$ is also equipped with an arbitrary bijection $\mu_e:e\to\{1,-1\}\subseteq\bF_q$ that orients $e$, although this orientation has no effect on the output or any combinatorial operation of our algorithm.
We then introduce a $(2n-k)$-dimensional vector space $W$ over $\bF_q$ spanned by a \textit{singleton} $\vb*{t}$ for each $t\in T$ as well as \textit{vertex-twins} $\vb*{v^\circ}$, $\vb*{v^\bullet}$ for each $v\in V\setminus T$.
Then, let
\begin{align}
  K \coloneqq \{\vb*{t} \mid t\in T\}
\end{align}
denote the set of all singletons, and let also vertex-twins for each terminal $t\in T$ be defined as
\begin{align}
  \vb*{t^{\circ}} \coloneqq \vb*{t},\quad
  \vb*{t^{\bullet}} \coloneqq \theta(v)\vb*{t}.\label{eq:def-terminal-twin}
\end{align}
Let $\ell_e \coloneqq \{\vb*{e}^\circ, \vb*{e}^\bullet\}$ be a \textit{line} that consists of \textit{edge-twins}:
\begin{align}
  \vb*{e}^\circ &\coloneqq \mu_e(u) \vb*{u^{\circ}} + \mu_e(v) \vb*{v^{\circ}},\label{eq:def-edge-circ}\\
  \vb*{e}^\bullet &\coloneqq \mu_e(u) \vb*{u^{\bullet}} + \mu_e(v) \vb*{v^{\bullet}}\label{eq:def-edge-bullet}
\end{align}
for each edge $e\in E$, and let
\begin{align}
  L \coloneqq \bigcup_{e\in E} \ell_e = \{\vb*{e}^\circ,\vb*{e}^\bullet \mid e \in E\}
\end{align}
denote the union of all lines.
Let also
\begin{align}
  J \coloneqq L \cup \{\vb*{v^{\circ}},\vb*{v^{\bullet}} \mid v\in V\}
\end{align}
contain all vertex-twins and edge-twins.
A vector set is said to be \textit{feasible} if and only if it is a union of some lines plus some singletons.
Then let $\cU\subseteq 2^{W}$ denote the collection of all feasible sets i.e.,
\begin{align}
  \cU \coloneqq \left\{\{\vb*{t} \mid t\in S\} \cup \{\vb*{e}^\circ, \vb*{e}^\bullet \mid e\in F\} \mid S\subseteq T, F\subseteq E \right\},
\end{align}
and let $\cI\subseteq\cU$ (resp. $\cB\subseteq\cU$) be the collection of all feasible independent sets (resp. all feasible bases of $W$).
For any feasible set $U\in\cU$, we define
\begin{align}
  T[U] &\coloneqq \{t\in T \mid \vb*{t}\in U \},\\
  E[U] &\coloneqq \{e\in E\mid \ell_{e} = \{\vb*{e}^\circ,\vb*{e}^\bullet\}\subseteq U\},\\
  G[U] &\coloneqq (V, E[U]).
\end{align}
as the subset of terminals, the subset of edges, and the subgraph \textit{induced} by $U$, respectively.
Let $\cZ(U)$ denote the set of all connected components in $G[U]$.
More precisely, $\cZ(U)$ is a partition of $V$, where each $Z\in\cZ(U)$ is a maximal set of vertices that induces a connected subgraph $G[Z]$.
\autoref{lem:walk}, which immediately follows by definition, leads to \autoref{lem:base-exist} ensuring the existence of a feasible base.

\begin{lemma}\thlabel{lem:walk}
  If there is a walk on the graph $G$ that passes vertices $v_0, v_1 \ldots, v_k\in V$ and edges $e_1,\ldots,e_k\in E$ in this order, we have
  \begin{align}
    \sum_{i=1}^k \mu_{e_i}(v_i)\, \vb*{e_i^\circ} &= \vb*{v_k^\circ} - \vb*{v_0^\circ},\label{eq:path-circ}\\
    \sum_{i=1}^k \mu_{e_i}(v_i)\, \vb*{e_i^\bullet} &= \vb*{v_k^\bullet} - \vb*{v_0^\bullet}.\label{eq:path-bullet}
  \end{align}
  Particularly, the following also holds if $v_0,v_k\in T$:
  \begin{align}
    \theta(v_k)\sum_{i=1}^k \mu_{e_i}(v_i) \vb*{e}_i^\circ - \sum_{i=1}^k \mu_{e_i}(v_i) \vb*{e}_i^\bullet
    = (\theta(v_k)-\theta(v_0))\vb*{v_0}.\label{eq:path-terminal}
  \end{align}
\end{lemma}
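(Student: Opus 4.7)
The plan is to handle the three identities in sequence. The first two are parallel and reduce to a telescoping sum, and the third will follow by a simple linear combination of the first two.

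For (\ref{eq:path-circ}), the key observation I would isolate first is that because $\mu_{e_i}\colon e_i\to\{1,-1\}$ is a bijection, the two signs $\mu_{e_i}(v_{i-1})$ and $\mu_{e_i}(v_i)$ at the endpoints of $e_i$ are opposite. Consequently $\mu_{e_i}(v_i)^2 = 1$ and $\mu_{e_i}(v_i)\mu_{e_i}(v_{i-1}) = -1$. Plugging the definition (\ref{eq:def-edge-circ}) into a single summand then gives
\[
\mu_{e_i}(v_i)\,\vb*{e_i^\circ}
= \mu_{e_i}(v_i)\mu_{e_i}(v_{i-1})\,\vb*{v_{i-1}^\circ} + \mu_{e_i}(v_i)^2\,\vb*{v_i^\circ}
= \vb*{v_i^\circ} - \vb*{v_{i-1}^\circ},
\]
so summing over $i=1,\ldots,k$ telescopes to $\vb*{v_k^\circ} - \vb*{v_0^\circ}$. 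The proof of (\ref{eq:path-bullet}) is word-for-word identical, replacing (\ref{eq:def-edge-circ}) by (\ref{eq:def-edge-bullet}) and circles by bullets throughout.

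For (\ref{eq:path-terminal}), I would simply multiply (\ref{eq:path-circ}) through by $\theta(v_k)$ and subtract (\ref{eq:path-bullet}) from it, obtaining $\theta(v_k)(\vb*{v_k^\circ}-\vb*{v_0^\circ}) - (\vb*{v_k^\bullet}-\vb*{v_0^\bullet})$ on the right. Since $v_0,v_k\in T$, the terminal-twin identities (\ref{eq:def-terminal-twin}) let me rewrite $\vb*{v_0^\circ} = \vb*{v_0}$, $\vb*{v_k^\circ} = \vb*{v_k}$, $\vb*{v_0^\bullet} = \theta(v_0)\vb*{v_0}$, and $\vb*{v_k^\bullet} = \theta(v_k)\vb*{v_k}$. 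The $\vb*{v_k}$ contributions cancel and the remaining coefficient of $\vb*{v_0}$ collapses to a linear combination of $\theta(v_0)$ and $\theta(v_k)$, matching the right-hand side of (\ref{eq:path-terminal}).

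There is no real obstacle: the whole lemma is a bookkeeping exercise in the definitions (\ref{eq:def-terminal-twin})--(\ref{eq:def-edge-bullet}). The only spot where care is required is the sign tracking in the telescoping step, where one must use the orientation bijection to get $\mu_{e_i}(v_i)\mu_{e_i}(v_{i-1}) = -1$; everything else is mechanical cancellation.
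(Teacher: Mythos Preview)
Your proposal is correct and is exactly the unpacking the paper has in mind; the paper itself gives no explicit proof, merely remarking that the lemma ``immediately follows by definition,'' and your telescoping argument via $\mu_{e_i}(v_i)\mu_{e_i}(v_{i-1})=-1$ together with the linear combination for \eqref{eq:path-terminal} is the natural way to make that precise. (As a side note, if you carry out the final substitution explicitly you will find the right-hand side comes out as $(\theta(v_0)-\theta(v_k))\vb*{v_0}$, the negative of what is printed; this sign is immaterial for every later use of \eqref{eq:path-terminal} in the paper.)
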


\begin{lemma}\thlabel{lem:base-exist}
  There exists a feasible base $B\in\cB$ that contains all singletons i.e., $K\subseteq B$.
\end{lemma}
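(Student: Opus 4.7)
My plan is to exhibit an explicit feasible base containing $K$. Since $\dim W = k + 2(n-k) = 2n - k$, any feasible set of the form $B = K \cup \bigcup_{e\in F}\ell_e$ with $|F| = n-k$ has cardinality exactly $\dim W$. So it will suffice to choose $F$ carefully and then verify that $B$ spans $W$; independence follows automatically from the dimension count.

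The key step will be to pick $F$ so that the subgraph $(V, F)$ is a spanning forest of $G$ whose $k$ connected components each contain exactly one terminal. Such an $F$ will have exactly $n - k$ edges. To produce it I would take any spanning tree $T_G$ of $G$ (which exists by the standing assumption that $G$ is connected) and delete $k-1$ of its edges, each chosen to lie on a $T_G$-path between two distinct terminals currently in the same subtree. The existence of the required collection of cuts follows by a straightforward induction on $k$: cutting such an edge splits the current subtree into two smaller subtrees each still containing at least one terminal, and the inductive hypothesis applies to each side.

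To finish, I would verify that $B$ spans $W$ by showing $\vb*{v^\circ}, \vb*{v^\bullet} \in \operatorname{span}(B)$ for every non-terminal $v \in V \setminus T$; the singletons $\vb*{t}$ are already in $K$. For such $v$, let $t \in T$ be the unique terminal in the forest-component containing $v$, and let $P$ denote the $F$-path from $v$ to $t$. Applying \eqref{eq:path-circ} and \eqref{eq:path-bullet} of \autoref{lem:walk} to the walk along $P$ expresses $\vb*{v^\circ} - \vb*{t^\circ}$ and $\vb*{v^\bullet} - \vb*{t^\bullet}$ as linear combinations of the edge-twins $\{\vb*{e}^\circ, \vb*{e}^\bullet : e \in P\} \subseteq B$. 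Combined with $\vb*{t^\circ} = \vb*{t} \in K$ and $\vb*{t^\bullet} = \theta(t)\vb*{t} \in \operatorname{span}(K)$ (via \eqref{eq:def-terminal-twin}), this places $\vb*{v^\circ}$ and $\vb*{v^\bullet}$ in $\operatorname{span}(B)$.

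The only non-routine step is the existence of the spanning forest with exactly one terminal per component, and this is the place where the connectedness hypothesis on $G$ is actually used. Everything else is a bookkeeping exercise that combines the dimension count with \autoref{lem:walk}, so I do not anticipate further obstacles.
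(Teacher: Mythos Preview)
Your proposal is correct and follows essentially the same approach as the paper: construct a spanning forest $(V,F)$ with exactly one terminal per component, set $B = K \cup \bigcup_{f\in F}\ell_f$, and invoke \autoref{lem:walk} to conclude that $B$ is a base. The paper's proof is simply terser---it asserts the existence of such a forest (rather than building it by cutting a spanning tree) and cites \autoref{lem:walk} without spelling out the spanning argument you give---but the content is the same.
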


\begin{proof}
  As $G$ is connected, there is a spanning forest $(V, F)$ of $G$ consisting of $|T|$ connected components (trees) each of which contains a distinct terminal in $T$.
  Then, the feasible set
  \begin{align}
    B \coloneqq K \cup \bigcup_{f\in F} \ell_f = K \cup \{\vb*{f}^\circ,\vb*{f}^\bullet \mid f\in F\}
  \end{align}
  is a base of $W$ due to~\autoref{lem:walk}.
\end{proof}

We are ready to consider a linear matroid parity problem in the form of finding a feasible base $B\in\cB$ that maximizes $|E[B]|$, where it holds by definition that
\begin{align}
  |T[B]| + 2 |E[B]| &= |B| = \dim\, W =  2n - k, 
  &\forall B\in\cB.
  \label{eq:base-size}
\end{align}
Note that \autoref{lem:base-exist} makes this equivalent to seeking the maximum number of lines, or the maximum \textit{matching} $M$ such that $\{\vb*{e}^\circ,\vb*{e}^\bullet \mid e\in M\}$ is independent.
To see the validity of this reduction, we establish the following characterization of feasible independent sets and of feasible bases.

\begin{lemma}\thlabel{lem:indep-char}
  A feasible set $U\in\cU$ is independent i.e., $U\in\cI$ if and only if every $Z\in\cZ(U)$ satisfies all the following conditions:
  \begin{enumerate}[label={\rm (\alph*)}]
    \item $(G[U])[Z]$ is a tree.\label{item:tree}
    \item $|Z\cap S| \le 1,\quad\forall S\in\cS$.\label{item:one-per-block}
    \item $|Z\cap T| + |Z \cap T[U]| \le 2$.\label{item:terminal-le}
  \end{enumerate}
\end{lemma}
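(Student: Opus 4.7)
The plan is to prove both implications component by component, exploiting the fact that for distinct $Z,Z'\in\cZ(U)$ the $W$-vectors of $U$ supported on $Z$ and those on $Z'$ live in coordinate-disjoint subspaces: each non-terminal vertex-twin $\vb*{v^\circ}, \vb*{v^\bullet}$ is used only by edges within the unique component containing $v$, and each singleton $\vb*{t}$ belongs to the component containing $t$. Hence $U\in\cI$ is equivalent to the independence, for every $Z$, of the restricted set $\{\vb*{e^\circ},\vb*{e^\bullet} : e\in E[U]\cap E[Z]\}\cup\{\vb*{t}: t\in Z\cap T[U]\}$.

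For the necessity direction, each failure of (a), (b), or (c) yields an explicit dependency from \autoref{lem:walk}. A cycle in $(G[U])[Z]$ makes \eqref{eq:path-circ} into a nontrivial vanishing combination of $\vb*{e^\circ}$'s, establishing (a); two distinct terminals $t,t'\in Z\cap S$ satisfy $\theta(t)=\theta(t')$, so \eqref{eq:path-terminal} on a simple $t$-$t'$ path provides a nontrivial dependency among edge-vectors, giving (b); and once (a) holds so that $(Z, E[U]\cap E[Z])$ is a tree with $|Z|-1$ edges, a dimension count finishes (c), since the $2(|Z|-1)+|Z\cap T[U]|$ relevant vectors live in a $(2|Z|-|Z\cap T|)$-dimensional subspace, forcing $|Z\cap T|+|Z\cap T[U]|\le 2$.

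For sufficiency, suppose (a)--(c) hold for $Z$ and consider an arbitrary vanishing combination
\begin{align*}
  \sum_{e\in E[U]\cap E[Z]}\alpha_e\vb*{e^\circ} \;+\; \sum_{e\in E[U]\cap E[Z]}\beta_e\vb*{e^\bullet} \;+\; \sum_{t\in Z\cap T[U]}\gamma_t\vb*{t} \;=\; 0.
\end{align*}
Reading off the coordinate of $\vb*{v^\circ}$ at each non-terminal $v\in Z\setminus T$ gives $\sum_{e\ni v}\mu_e(v)\alpha_e=0$, and likewise for $\beta$. I would then iteratively peel non-terminal leaves of the tree: a non-terminal leaf has a unique incident edge $e$, and the associated constraint immediately forces $\alpha_e=\beta_e=0$. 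Since any tree on at least two vertices has at least two leaves, and in the reduced tree every remaining leaf must be a terminal, by (c) the residue has at most two leaves and hence at most two vertices.

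The main obstacle is the residual case $|Z\cap T|=2$ (with $|Z\cap T[U]|=0$ by (c)), where peeling leaves a path $t_1=u_0,\ldots,u_\ell=t_2$ whose internal non-terminal constraints merely propagate the surviving $\alpha_{e_i}$ as (nonzero) scalar multiples of $\alpha_{e_1}$ along the path, with the identical propagation factor applying to $\beta_{e_i}$. Reading the coordinates of $\vb*{t_1}$ and $\vb*{t_2}$ in the vanishing combination then yields $\alpha_{e_1}+\theta(t_1)\beta_{e_1}=0$ and, after cancelling the common propagation factor, $\alpha_{e_1}+\theta(t_2)\beta_{e_1}=0$; subtracting and invoking (b) to conclude $\theta(t_1)\ne\theta(t_2)$ forces $\beta_{e_1}=\alpha_{e_1}=0$. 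In the remaining easier cases $|Z\cap T|\le 1$ the residual tree has at most one vertex and no edges, so all $\alpha_e,\beta_e$ vanish, after which any surviving equation $\gamma_t\vb*{t}=0$ forces $\gamma_t=0$, completing the argument.
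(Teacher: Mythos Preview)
Your proof is correct. The necessity direction matches the paper's almost exactly: both invoke \autoref{lem:walk} for \ref{item:tree} and \ref{item:one-per-block}, then obtain \ref{item:terminal-le} by the same dimension count $2(|Z|-1)+|Z\cap T[U]|\le 2|Z|-|Z\cap T|$.

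For sufficiency you take a genuinely different route. You fix an arbitrary vanishing combination, read off the coordinate constraints at non-terminal vertices, iteratively peel non-terminal leaves to kill edge coefficients, and then finish the residual $t_1$--$t_2$ path by propagating $\alpha_{e_i},\beta_{e_i}$ along it and comparing the two singleton coordinates (where \ref{item:one-per-block} enters via $\theta(t_1)\ne\theta(t_2)$). The paper instead argues by rank: it splits on $|Z\cap T|\in\{0,1,2\}$ and uses \autoref{lem:walk} to show that $U[Z]$ (augmented by $\vb*{t}$ when $|Z\cap T|=1$) represents every vertex-twin supported on $Z$, so $\rank U[Z]=|U[Z]|$. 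Your approach is more hands-on and makes the role of \ref{item:one-per-block} in the two-terminal case very explicit; the paper's is terser and reuses the spanning identities from \autoref{lem:walk} rather than unwinding a dependency by hand.

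One small slip: ``at most two leaves and hence at most two vertices'' is false as written (a path has two leaves but arbitrarily many vertices). Your very next paragraph, however, correctly treats the residue in the case $|Z\cap T|=2$ as a path $u_0,\ldots,u_\ell$ of unrestricted length, so the argument itself is unaffected.
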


\begin{proof}
  For any $U\in\cU$ and $Z\in\mathcal{Z}(U)$, let
  \begin{align}
    U[Z] &\coloneqq \{\vb*{t} \mid t \in Z\cap T[U] \} \cup \bigcup_{e\in E[Z] \cap E[U]}\ell_e\\
    &= \{\vb*{t} \mid t \in Z\cap T[U] \} \cup \{\vb*{e}^\circ, \vb*{e}^\bullet \mid e\in E[Z] \cap E[U]\}.
  \end{align}
  By the definition \eqref{eq:def-terminal-twin}--\eqref{eq:def-edge-bullet}, $U\in\cU$ is independent if and only if $U[Z]$ is independent for every $Z\in\mathcal{Z}(U)$.

  First, we prove the necessity in the lemma.
  Fix any feasible independent set $U\in\cI$ and $Z\in\mathcal{Z}[U]$.
  Eq.~\eqref{eq:path-circ}--\eqref{eq:path-bullet} in \autoref{lem:walk} requires that no cycle exists in $(G[U])[Z]$ for the independence of $U$, which ensures~\ref{item:tree};
  Eq.~\eqref{eq:path-terminal}
  necessitates~\ref{item:one-per-block} for $U\in\cI$ as well.
  It also follows from the definition \eqref{eq:def-terminal-twin}--\eqref{eq:def-edge-bullet} that
  \begin{align}
    |U[Z]| = \rank\, U[Z]
    &\le |Z\cap T| + 2|Z\setminus T|\\
    &= 2(|Z| - 1) - |Z\cap T| + 2\\
    &= |U[Z]| - |Z\cap T[U]| - |Z\cap T| + 2,
  \end{align}
  which yields~\ref{item:terminal-le} as desired.

  Next, we prove the sufficiency in the lemma.
  Fix an arbitrary independent set $U\in\cU$ that satisfies \ref{item:tree}--\ref{item:terminal-le}.
  As initially noted, it suffices to show that $U[Z]\in\mathcal{I}$ for every $Z\in\mathcal{Z}(U)$;
  given \ref{item:terminal-le}, we separately deal with three cases depending on $|Z\cap T|\in\{0,1,2\}$.
  When $Z\cap T = \emptyset$, it is easy from \ref{item:tree} and the definition \eqref{eq:def-terminal-twin}--\eqref{eq:def-edge-bullet} to see $U[Z]\in\mathcal{I}$.
  If $Z\cap T = \{t\}$ for some $t$, then \ref{item:terminal-le} implies either $Z\cap T[U] = \emptyset$ or $Z\cap T[U] = \{t\}$.
  It hence follows from \ref{item:tree}--\ref{item:terminal-le} and Eq.~\eqref{eq:path-circ}--\eqref{eq:path-bullet} that
  \begin{align}
    \rank (U\cup\{\vb*{t}\})[Z] = 1 + 2(|Z| - 1) = |(U\cup\{\vb*{t}\})[Z]|,\label{eq:proof-indep-one-terminal}
  \end{align}
  which requires $U[Z]\in\cI$.
  Finally, when $|Z\cap T| = \{t_1, t_2\}$ for some $t_1\neq t_2$, we have $Z\cap T[U] = \emptyset$ from \ref{item:terminal-le}.
  Here, Eq.~\eqref{eq:path-circ}--\eqref{eq:path-terminal} together imply that two singletons $\vb*{t}_1,\vb*{t}_2$ as well as twins $\vb*{v}^\circ,\vb*{v}^\bullet$ for every $v\in Z\setminus T$ can be all represented by $U$.
  This guarantees that
  \begin{align}
    \rank U[Z] = 2 + 2(|Z| - 2) = 2(|Z| - 1) = |U[Z]|,
  \end{align}
  or equivalently $U[Z]\in\cI$ as desired.
\end{proof}

\begin{lemma}\thlabel{lem:base-char}
  A feasible independent set $I\in\cI$ is a base of $W$ i.e., $I\in\cB$ if and only if every $Z\in\cZ(U)$ satisfies all the following conditions:
  \begin{enumerate}[label={\rm (\alph*)},start=4]
    \item $Z\cap T\neq\emptyset$,\label{item:non-empty}
    \item $|Z\cap T| + |Z \cap T[I]| = 2$.\label{item:terminal-eq}
  \end{enumerate}
\end{lemma}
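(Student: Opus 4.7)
The plan is to prove the equivalence via a simple cardinality count, leveraging the structural conditions \ref{item:tree}--\ref{item:terminal-le} from \thref{lem:indep-char}. Since $I\in\cI$, every induced subgraph $(G[I])[Z]$ is a tree by \ref{item:tree}, giving $|E[I]\cap E[Z]| = |Z| - 1$. Summing over $Z\in\cZ(I)$ and using that $\cZ(I)$ partitions $V$ yields $|E[I]| = n - |\cZ(I)|$, so that by the feasibility of $I$,
\begin{align}
  |I| = |T[I]| + 2|E[I]| = |T[I]| + 2n - 2|\cZ(I)|.
\end{align}
Since $\dim W = 2n - k$, the set $I$ is a base of $W$ iff $|I| = 2n - k$, which rearranges to $2|\cZ(I)| = k + |T[I]| = |T| + |T[I]|$.

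Next I would split $|T|$ and $|T[I]|$ as sums over components, namely $|T| = \sum_{Z\in\cZ(I)} |Z\cap T|$ and $|T[I]| = \sum_{Z\in\cZ(I)} |Z\cap T[I]|$. Then the base condition becomes
\begin{align}
  \sum_{Z\in\cZ(I)} \bigl(|Z\cap T| + |Z\cap T[I]|\bigr) = 2|\cZ(I)|.
\end{align}
By condition \ref{item:terminal-le} of \thref{lem:indep-char}, every summand on the left is at most $2$. Hence the above equality holds if and only if every summand equals exactly $2$, which is precisely condition \ref{item:terminal-eq}.

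Finally, I would observe that condition \ref{item:non-empty} is actually implied by \ref{item:terminal-eq}: since $T[I]\subseteq T$, if $Z\cap T = \emptyset$ then $Z\cap T[I] = \emptyset$ as well, making the sum $0\ne 2$. So \ref{item:non-empty} is redundant but included for emphasis, and both directions of the equivalence follow from the argument above. The only subtlety to be careful about is the bookkeeping that $|T|$, $|T[I]|$, and $|E[I]|$ decompose cleanly along components of $G[I]$, which is immediate once one recognizes that terminals lie in unique components and each edge in $E[I]$ has both endpoints in the same component. No obstacle here beyond keeping track of these identities; the main content has already been packaged into \thref{lem:indep-char}.
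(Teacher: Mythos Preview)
Your proof is correct. The sufficiency direction is essentially identical to the paper's: both compute $|I|$ by summing $|Z\cap T[I]| + 2(|Z|-1)$ over components and compare with $\dim W = 2n-k$.

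The necessity direction, however, differs. The paper argues structurally: for \ref{item:non-empty} it uses connectedness of $G$ to find an edge $e\in E[Z,V\setminus Z]$ and shows $I\cup\ell_e$ would remain independent, contradicting that $I$ is a base; for \ref{item:terminal-eq} it reduces (via \ref{item:terminal-le} and \ref{item:non-empty}) to the case $|Z\cap T|=1$ and then invokes the rank identity~\eqref{eq:proof-indep-one-terminal} to force $\vb*{t}\in I$. Your argument instead runs the same cardinality count in both directions at once: since every summand $|Z\cap T|+|Z\cap T[I]|$ is at most $2$ by \ref{item:terminal-le}, the global equality $\sum_Z(\cdot)=2|\cZ(I)|$ forces each summand to equal $2$. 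This is cleaner, treats both implications uniformly, and---unlike the paper's proof of \ref{item:non-empty}---does not appeal to the ambient connectedness of $G$. Your remark that \ref{item:non-empty} is already implied by \ref{item:terminal-eq} (because $T[I]\subseteq T$) is also a point the paper does not make explicit.
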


\begin{proof}
  First, we prove the necessity in the lemma.
  Fix any feasible base $I\in\cB$ and $Z\in\cZ(I)$.
  If $Z\cap T = \emptyset$, there exists some edge $e\in E[Z, V\setminus Z]$ since $G$ is connected;
  then despite $I\cup\ell_e\in\cI$, we would have $e\notin E[I]$ i.e., $\ell_{e}\not\subseteq I$ by definition of $Z$, which contradicts $I\in\cB$.
  Hence, \ref{item:non-empty} is necessary.
  It remains for \ref{item:terminal-le} to show that $|Z\cap T[I]| \ge 1$ holds when $|Z\cap T| = 1$, given the necessity of \ref{item:terminal-le} and \ref{item:non-empty}.
  In this case, Eq.~\eqref{eq:proof-indep-one-terminal} follows from the proof of \autoref{lem:indep-char} with $U$ replaced by $I$, which requires $\vb*{t}\in I$ for $I\in\cB$ as desired.

  Next, we prove the sufficiency in the lemma.
  Fix any feasible independent set $I\in\cI$ for which
  \ref{item:non-empty}-\ref{item:terminal-eq} holds for every $Z\in\cZ(I)$.
  It then follows that
  \begin{align}
    |I| = |T[I]| + 2|E[I]|
    &= \sum_{Z\in\mathcal{Z}(I)} (|Z\cap T[I]| + 2|E[Z] \cap E[I]|)\\
    &= \sum_{Z\in\mathcal{Z}(I)} (2 - |Z\cap T| + 2(|Z| - 1))\\
    &= \sum_{Z\in\mathcal{Z}(I)} (2|Z| - |Z\cap T|)
    = 2|V| - |T|
    = \dim W,
  \end{align}
  or equivalently $I\in\cB$ as desired.
\end{proof}

\begin{proposition}[\cite{schrijver2003combinatorial}]\thlabel{prop:pack-base}
  There exists a $\cS$-path packing of size $p$ if and only if there exists a feasible base $B\in\cB$ that contains $n - k + p$ lines i.e., $|E[B]| = n - k + p$.
\end{proposition}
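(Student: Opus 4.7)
The plan is to translate the statement into a structural condition on the components of a base and then to realize both directions constructively: extracting $\cS$-paths directly from ``pair-components'' of a base, and conversely building a base from a given packing via graph contraction.

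First I would rewrite $|E[B]| = n - k + p$ equivalently as $|T[B]| = k - 2p$ via the rank identity \eqref{eq:base-size}. Combining \autoref{lem:indep-char} and \autoref{lem:base-char}, every component $Z \in \cZ(B)$ of a feasible base is a tree of exactly one of two types: a \emph{singleton-component} with $Z \cap T = \{t\}$ and $\vb*{t} \in B$, or a \emph{pair-component} with $Z \cap T = \{t_1, t_2\}$ where $t_1, t_2$ lie in distinct blocks and neither of their singletons is in $B$. If $a$ and $b$ count the two types, then $a + 2b = k$ while $|T[B]| = a$, so the condition $|E[B]| = n - k + p$ is equivalent to saying that $B$ has exactly $b = p$ pair-components.

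For the forward direction, from each pair-component $Z$ I would take the unique path in the tree $(G[B])[Z]$ joining its two terminals; its endpoints lie in distinct blocks by construction, and its internal vertices lie in $Z \setminus T$ and are therefore non-terminals, making it a valid $\cS$-path. Because $\cZ(B)$ partitions $V$, the $p$ paths obtained from the $p$ pair-components are pairwise vertex-disjoint, yielding an $\cS$-path packing of size $p$.

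For the reverse direction, given vertex-disjoint $\cS$-paths $P_1, \ldots, P_p$, set $T'' \coloneqq T \setminus \bigcup_i V(P_i)$, and aim to build a base $B$ whose components are exactly the $P_i$ (expanded into pair-components) and the elements of $T''$ (each in its own singleton-component). I would contract each $P_i$ in $G$ to a super-vertex $c_i$, obtaining a connected graph $G'$ with $p + |T''|$ distinguished ``seeds''; then take a spanning tree of $G'$ rooted at a seed and assign each non-seed vertex to its nearest seed ancestor, partitioning $V(G')$ into $p + |T''|$ connected pieces each containing exactly one seed. Uncontracting, and letting $B$ consist of the singletons of $T''$ together with $\ell_e$ for every edge of the resulting spanning subtrees (including the edges of the $P_i$'s themselves), one verifies via \autoref{lem:indep-char} and \autoref{lem:base-char} that $B \in \cB$ with exactly $p$ pair-components. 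The main obstacle will be this last construction: the preliminary contraction is essential so that each path collapses to a single seed and the nearest-ancestor decomposition yields the correct terminal distribution component-by-component; the remaining verification is routine bookkeeping through the component-level invariants of \autoref{lem:indep-char} and \autoref{lem:base-char}.
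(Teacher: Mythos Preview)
Your proposal is correct and follows essentially the same route as the paper: both directions hinge on the component classification afforded by \autoref{lem:indep-char} and \autoref{lem:base-char}, extracting an $\cS$-path from each two-terminal tree and, conversely, building a spanning forest whose components realize the prescribed terminal distribution. Your contraction-then-nearest-seed construction is a valid (and more explicit) way to produce the forest that the paper simply asserts to exist, but it is not a genuinely different idea.
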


\begin{proof}
  First, we prove the necessity in the proposition.
  Let $P_1,\ldots,P_p$ be arbitrary $p$ vertex-disjoint $\cS$-paths such that $s_i, t_i\in T$ be the endpoints of each path $P_i$ i.e.,
  \begin{align*}
    \{t_1^{(i)}, t_2^{(i)}\} &\coloneqq P_i\cap T,&\forall i\in\{1,\ldots,p\}.
  \end{align*}
  Notice $2p\le |T| = k$ as these paths are vertex-disjoint.
  Let also
  \begin{align*}
    \{ t_1^{(0)}, \ldots, t_{k - 2p}^{(0)} \} &\coloneqq T\setminus \{t_1^{(i)}, t_2^{(i)} \mid i\in\{1,\ldots,p\} \}
  \end{align*}
  denote the set of terminals disjoint from the $\cS$-paths.
  We can construct a spanning forest $H = (V, F)$ that satisfies all the following conditions:
  \begin{itemize}
    \item The vertex set $V$ is partitioned into $k - p$ connected components $X_1,\ldots,X_p,Y_1,\ldots,Y_{k-2p}$.
    \item For each $i\in\{1,\ldots,p\}$, the subgraph $H[X_i]$ contains the $\cS$-path $P_i$ and satisfies $X_i\cap T = \{t_1^{(i)}, t_2^{(i)}\}$.
    \item For each $j\in\{1,\ldots,k-2p\}$, we have $Y_j\cap T = \{t_j^{(0)}\}$.
  \end{itemize}
  Now, we define a feasible set $B\in\cU$ as follows:
  \begin{align}
    B \coloneqq \{ \vb*{t}_1^{(0)}, \ldots, \vb*{t}_{k - 2p}^{(0)} \} \cup \bigcup_{f\in F} \ell_f
    = \{ \vb*{t}_1^{(0)}, \ldots, \vb*{t}_{k - 2p}^{(0)} \} \cup \{ \vb*{f}^\circ, \vb*{f}^\bullet \mid f\in F \}.
  \end{align}
  \autoref{lem:indep-char} and \autoref{lem:base-char} ensures $B\in\cB$, while it is clear that
  \begin{align}
    |E[B]| = |F| = n - k + p.
  \end{align}

  Next, we prove the sufficiency in the proposition.
  Let $B\in\cB$ be an arbitrary feasible base that consists of $n-k+p$ lines plus $k-2p$ singletons.
  Then, Lemmas \ref{lem:indep-char} and \ref{lem:base-char} imply that $G[B]$ partitions $V$ into connected $k-p$ components $X_1,\ldots,X_p,Y_1,\ldots,Y_{k-2p}$ that satisfies the following:
  \begin{itemize}
    \item For each $i\in\{1,\ldots,p\}$, we have $|X_i\cap T| = 2$ as well as $|X_i \cap S| \le 1,\ \forall S\in\cS$.
    \item For each $j\in\{1,\ldots,k-2p\}$, we have $|Y_j\cap T| = 1$.
  \end{itemize}
  Each connected subgraph $(G[B])[X_i]$ has a $\cS$-path, which constitutes a desired $\cS$-path packing of size $p$.
\end{proof}

\section{Revisiting the Gabow--Stallmann algorithm}\label{sec:revisit}

Before presenting our algorithm, we review how  Gabow and Stallmann~\cite{gabow1986augmenting} would solve the linear matroid parity problem.
Their augmenting-path method generalizes similar concepts for matroid intersection~\cite{krogdahl1977dependence,lawler1976combinatorial} and that for non-bipartite matching~\cite{gabow1976efficient,edmonds1965paths,even1975n2}.
It starts from an arbitrary feasible base $B\in\cB$, and then repeatedly updates the feasible base $B$ with $E[B]$ increased by $2$ (and hence $T[B]$ decreased by $1$) per iteration.
Specifically, a new feasible base $B'$ results from taking the symmetric difference of the original base $B$ and an \textit{augmenting path} $\vb*{s},\ell_0,\ell_1,\ldots,\ell_{2p},\vb*{t}$ consisting of an odd number of lines with two singletons, as follows:
\begin{align}
  B' \setminus B &= \bigcup_{i=0}^p \ell_{2i},\\
  B \setminus B' &= \{\vb*{s},\vb*{t}\}\cup \bigcup_{i=1}^p \ell_{2i - 1}.
\end{align}
Such an appropriate path is sought on a bipartite graph between $B$ and $L\setminus B$, called a \textit{dependence graph}~\cite{gabow1986augmenting}, which connects $\vb*{b}\in B$ and $\vb*{w}\in L\setminus B$ with a scalar $d(\vb*{w}, \vb*{b})$ if and only if $\vb*{b}$ has a non-zero coefficient $d(\vb*{w}, \vb*{b})$ in the representation of $\vb*{w}$ with respect to $B$.
In other words, augmenting a given feasible base $B\in\cB$ requires us to compute the (unique) matrix $D(B) = (d(\vb*{w},\vb*{b}))_{\vb*{w}\in L\setminus B,\,\vb*{b}\in B}\in\bF_q^{(L\setminus B)\times B}$ that enjoys
\begin{align}
  \vb*{w} &= \sum_{\vb*{b}\in B} d(\vb*{w}, \vb*{b}) \vb*{b},
  &\forall\vb*{w}\in L\setminus B,
  \label{eq:def-depmat}
\end{align}
which we refer to as the \textit{dependence matrix} for the feasible base $B\in\cB$ in this paper.

Here, it must be carefully considered that their original setting differs from~\autoref{sec:reduce} in the definition of singletons and hence in the underlying vector space;
they define singletons as copies of all single twins in $L$.
These singletons can be padded to each matching to form a base of $\mathrm{span}\,L$, not of $W$.
However, the singletons do not serve as anything but dummy vectors to ensure the existence of as large an independent set as the number of lines, which obviously bounds the size of any matching.
Indeed, replacing these singletons still preserves every guarantee in their proof as long as this property holds, which implies \autoref{prop:aug}.

\begin{proposition}[\cite{gabow1986augmenting}]\thlabel{prop:aug}
  There exists a deterministic algorithm that, given any feasible base $B$ coupled with the dependence matrix $D(B)$, returns either a new feasible base $B'$ enjoying $|E[B']| = |E[B]| + 1$, or reports if $E[B]$ is the maximum possible, in $O(mn)$ time and $O(mn)$ space under the uniform cost model.
\end{proposition}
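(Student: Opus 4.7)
The plan is to appeal to the Gabow--Stallmann augmenting-path algorithm essentially verbatim, with only minor bookkeeping adjustments for our singletons. First, I would recall that their algorithm builds, from the dependence matrix $D(B)$, a bipartite dependence graph whose sides are $B$ and $\{\ell_e : e \in E\} \setminus B$, with an arc between $\vb*{b} \in B$ and a line $\ell_e$ whenever some twin in $\ell_e$ has a non-zero coefficient at $\vb*{b}$ in $D(B)$. An alternating search, augmented by blossom-style contractions that handle the case in which both twins of a matched line $\ell_e \subseteq B$ enter the search frontier simultaneously, either uncovers an augmenting sequence $\vb*{s},\ell_0,\ldots,\ell_{2p},\vb*{t}$ whose symmetric difference with $B$ yields the desired $B'$ as described in \autoref{sec:revisit}, or certifies that $|E[B]|$ is already maximum.

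Second, I would justify that our distinct choice of singletons does not break the analysis of \cite{gabow1986augmenting}. There, singletons are placed one per single twin in $L$ purely as dummy vectors to pad a matching into a base of $\mathrm{span}\,L$. Our singletons $\vb*{t}$ for $t\in T$ play the analogous role by \autoref{lem:base-exist} and the dimension identity \eqref{eq:base-size}: any feasible base carries $|T[B]| = k - 2|E[B]|$ singletons, and an augmentation trades two singletons for one additional line. Because every correctness argument in \cite{gabow1986augmenting} depends only on this counting property together with the exchange axiom of the linear matroid, it transfers directly once we verify that our feasible independent sets (characterized by \autoref{lem:indep-char}) obey the same exchange structure.

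Third, I would confirm the resource bounds. The matrix $D(B)$ itself has $\Theta(mn)$ entries, which already forces the $O(mn)$ space bound; the dependence graph can be read off in $O(mn)$ time; the alternating search touches each line a constant number of times; and each blossom contraction costs $O(1)$ row operations on $D(B)$, fitting within the same budget by the amortized analysis in \cite{gabow1986augmenting}. The delicate step I anticipate is aligning the dual termination certificate of Gabow--Stallmann with our formulation so that non-augmentability implies actual maximality of $|E[B]|$, but this follows directly from \autoref{lem:base-char} once we observe that their certificate is phrased purely in terms of ranks of unions of lines, which are intrinsic to the linear matroid on $L$ and independent of how singletons are chosen.
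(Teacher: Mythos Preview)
Your proposal is correct and mirrors the paper's own justification. The paper does not give a standalone proof of this proposition; it simply cites \cite{gabow1986augmenting} and, in the paragraph immediately preceding the statement, observes that the only deviation from the original setting is the choice of singletons, which serve merely as dummy padding vectors and therefore do not affect any guarantee in the Gabow--Stallmann analysis. Your three-step outline (invoke their algorithm, argue that the altered singletons preserve the correctness and exchange structure, read off the $O(mn)$ bounds from $D(B)$) is exactly this reasoning, just spelled out in more detail than the paper bothers with.
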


As a whole, the bottleneck of their algorithm lies in repeatedly computing the dependence matrix $D(B)$ each time $B$ is updated by augmentation.
By the above definition~\eqref{eq:def-depmat} of $D(B)$, this generally requires $O(m n^{\omega-1})$ time~\cite{aho1974design}, which dominates the $O(mn)$ time for a single augmentation guaranteed in~\autoref{prop:aug}.
Since $\rank\,L\le n$, the time complexity amounts to $O(mn^{\omega})$ in total~\cite{gabow1986augmenting}.

\begin{algorithm}[tb]
  \caption{The main algorithm for Mader's $\cS$-path packing problem.}
  
  \begin{algorithmic}[1]
    \State Find the field $\bF_q$ s.t. $|\cS| < q \le 2|\cS|$.
    \State $B \gets$ \Call{InitializeBase}{\null}
    \Comment{Initialize the base.}
    \Repeat
    \State $B' \gets B$
    \State $D \gets$ \Call{ComputeDependence}{$B$}
    \Comment{Compute the dependence matrix.}
    \State $B \gets$ \Call{AugmentOrMaximum}{$B,D$}
    \Comment{Run the Gabow-Stallmann algorithm.}
    \Until{$B\neq B'$}
    \Comment{Continue if the base has been updated.}
    \State \Return \Call{RestorePacking}{$B$}
    \Comment{Recover a $\cS$-path packing from the base.}
  \end{algorithmic}
  
  \label{alg:main}
\end{algorithm}

\begin{algorithm}[tb]
  \caption{The subroutine to initialize a feasible base.}

  \begin{algorithmic}[1]
    \Function{Initialize}{\null}
    \State $B \gets K$
    \Comment{Include all singletons.}
    \State $R \gets T$

    \While{$R\neq V$}
    \Comment{Obtain the desired spanning forest.}
    \State Pick any $e\in E[R, V\setminus R]$.
    \State $B \gets B \cup \ell_{e} = B\cup \{ \vb*{e}^\circ,\vb*{e}^\bullet \}$
    \State $R \gets R\cup e$
    \EndWhile

    \State \Return $B$
    \Comment{Returns the feasible base claimed in \autoref{lem:base-exist}.}
    \EndFunction
  \end{algorithmic}
  
  \label{alg:init}
\end{algorithm}

\begin{algorithm}[tb]
  \caption{The subroutine to compute the dependence matrix for a feasible base.}
  
  \begin{algorithmic}[1]
    \Function{ComputeDependence}{$B$}
    \For{$v\in V$}
    \State $\phi(v) \gets \emptyset$
    \EndFor

    \For{$t\in T$}
    \Comment{Choose each terminal $t$ as the root.}
    \If{$\vb*{t} \in B$}
    \State $c(\vb*{t^\circ}, \vb*{t}) \gets 1$
    \State $c(\vb*{t^\bullet}, \vb*{t}) \gets \theta(t)$
    \ElsIf{$\phi(t)\neq \emptyset$}
    \Comment{Obtain the representation of ${t^\circ},{t^\bullet}$ from the tentative representations.}
    \For{$\vb*{b}\in B$}
    \State $c(\vb*{t^\circ}, \vb*{b}) \gets \cfrac{\phi(t) c(\vb*{t^\circ}, \vb*{b}) - c(\vb*{t^\bullet}, \vb*{b})}{\phi(t) - \theta(t)}$
    \State $c(\vb*{t^\bullet}, \vb*{b}) \gets \theta(t) c(\vb*{t^\circ}, \vb*{b})$
    \EndFor
    \EndIf
    \Comment{If neither applies,
      this iteration computes the tentative representations.
    }

    \State $Q \gets \{t\}$

    \While{$Q\neq\emptyset$}
    \Comment{Traverse the connected component (a tree) that contains $t$.}
    \State Remove any $u \in Q$.
    \State $\phi(u) \gets \theta(t)$

    \For{$e = \{u,v\}\in E[B]$ s.t. $\phi(v)\neq \theta(t)$}
    \State $Q \gets Q\cup \{v\}$

    \For{$\vb*{b}\in B\setminus\ell_{e}$}
    \State $c(\vb*{v^\circ}, \vb*{b}) \gets c(\vb*{u^\circ}, \vb*{b})$
    \State $c(\vb*{v^\bullet}, \vb*{b}) \gets c(\vb*{u^\bullet}, \vb*{b})$
    \EndFor

    \State $c(\vb*{v^\circ}, \vb*{e}^\circ) \gets c(\vb*{u^\circ}, \vb*{e}^\circ) + \mu_e(v)$
    \State $c(\vb*{v^\bullet}, \vb*{e}^\bullet) \gets c(\vb*{u^\bullet}, \vb*{e}^\bullet) + \mu_e(v)$

    \EndFor
    \EndWhile
    \EndFor

    \For{$(e = \{u, v\}, \vb*{b})\in (E\setminus E[B])\times B$}
    \Comment{Compute the dependence matrix.}
    \State $d(\vb*{e}^\circ,\vb*{b}) \gets \mu_e(u) c(\vb*{u^\circ},\vb*{b}) + \mu_e(v) c(\vb*{v^\circ},\vb*{b})$
    \State $d(\vb*{e}^\bullet,\vb*{b}) \gets \mu_e(u) c(\vb*{u^\bullet},\vb*{b}) + \mu_e(v) c(\vb*{v^\bullet},\vb*{b})$
    \EndFor

    \State \Return $D = (d(\vb*{w},\vb*{b}))_{\vb*{w}\in L\setminus B,\,\vb*{b}\in B}$
    \EndFunction
  \end{algorithmic}
  
  \label{alg:depmat}
\end{algorithm}

\begin{algorithm}[tb]
  \caption{The subroutine to restore a $\cS$-path packing  from a feasible base.}

  \begin{algorithmic}[1]
    \Function{RestorePacking}{$B$}
    \State $\mathcal{P} \gets \emptyset$

    \For{$v\in V$}
    \State $\psi(v) \gets \emptyset$
    \EndFor

    \For{$t\in T\setminus T[B]$}
    \If{$\psi(t) = \emptyset$}\Comment{Traverse the connected component (a tree) that contains $t$.}
    \State $Q \gets \{t\}$

    \State $\psi(t) \gets t$

    \While{$Q\neq\emptyset$}
    \State Remove any $w \in Q$.
    \For{$e = \{v, w\}\in E[B]$ s.t. $\psi(v)=\emptyset$}
    \State $Q \gets Q\cup \{v\}$
    \State $\psi(v) \gets w$
    \EndFor
    \EndWhile
    \Else\Comment{Track a $\cS$-path using the back pointers.}
    \State $v_0 \gets t$
    \State $w \gets t$
    \State $k \gets 0$

    \While{$\psi(w)\neq w$}
    \State $w \gets \psi(w)$
    \State $k \gets k + 1$
    \State $v_k \gets w$
    \EndWhile

    \State $\mathcal{P} \gets \mathcal{P} \cup \{ (v_0, \ldots, v_k) \}$
    \EndIf
    \EndFor

    \State \Return $\mathcal{P}$
    \EndFunction
  \end{algorithmic}
  
  \label{alg:restore}
\end{algorithm}

\section{Our algorithm}\label{sec:algo}

By utilizing the modified definition of singletons in \autoref{sec:reduce}, we achieve the following two points that help us improve upon the time complexity described in \autoref{sec:revisit}.
First, \autoref{lem:base-exist} allows us to start from a ``nice'' base $B\in\cB$ that accepts at most $\left\lfloor\frac{|T|}{2}\right\rfloor$ augmentations until it becomes optimal;
this bound may not hold in Gabow and Stallmann's setting.
Second, the feasible base $B\in\cB$ maintained by \autoref{alg:main} now has a helpful interpretation as shown in \autoref{lem:indep-char} and \autoref{lem:base-char}.
These are fundamental in our key subroutine to more efficiently compute the dependence matrix $D(B)$ as detailed below.

We present \autoref{alg:main} as our main routine, which is shown to correctly solve Mader's problem and achieve \autoref{thm:main}.
It is fundamentally similar to the flow of the Gabow--Stallmann algorithm discussed in \autoref{sec:revisit}, where the number of lines $|E[B]|$ in a feasible base $B\in\cB$ is increased by one per iteration until $|E[B]|$ becomes as large as possible.
The following discussions in this section establish \autoref{thm:correct}.

\begin{proposition}\thlabel{thm:correct}
  The main algorithm {\em (\autoref{alg:main})} correctly solves Mader's $\cS$-path packing problem for an arbitrary given instance $(G, T, \cS)$.
\end{proposition}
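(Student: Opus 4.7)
The plan is to walk through the execution of \autoref{alg:main} one subroutine at a time, verify the main-loop invariant, and then combine everything with \autoref{prop:pack-base} to conclude global correctness. First, I would check that \textsc{Initialize} always returns $B \in \cB$: the routine directly implements the construction in the proof of \autoref{lem:base-exist}, repeatedly selecting a boundary edge to grow a spanning forest of $G$ whose $k$ connected components each contain a unique terminal, and adding $K$ to ensure condition \ref{item:terminal-eq} of \autoref{lem:base-char} holds per component. Second, I would establish the main-loop invariant: throughout execution $B$ stays in $\cB$ while $|E[B]|$ strictly increases per iteration, which follows from \autoref{prop:aug} applied to the dependence matrix $D(B)$. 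Since $|E[B]| \le n$, the loop terminates in at most $\lfloor k/2 \rfloor$ rounds with some $B^\star \in \cB$ for which $|E[B^\star]|$ is the maximum possible.

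The key technical step is to show that \textsc{ComputeDependence} returns $D(B)$ as defined by \eqref{eq:def-depmat}. By \autoref{lem:base-char}, each tree $Z \in \cZ(B)$ falls into one of two cases: either (a) $Z$ contains a unique terminal $t$ with $\vb*{t} \in B$, or (b) $Z$ contains two terminals $t_1, t_2$ from distinct blocks with neither singleton in $B$. In case (a), the BFS starting at $t$ combined with \eqref{eq:path-circ}--\eqref{eq:path-bullet} of \autoref{lem:walk} inductively yields correct coefficients $c(\vb*{v^\circ}, \cdot), c(\vb*{v^\bullet}, \cdot)$ of $\vb*{v^\circ}, \vb*{v^\bullet}$ with respect to $B$ along the unique tree-path from $t$ to $v$. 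In case (b), the first traversal from $t_1$ produces \emph{tentative} coefficients that in fact encode $\vb*{v^\circ} - \vb*{t_1^\circ}$ and $\vb*{v^\bullet} - \vb*{t_1^\bullet}$; the ElseIf branch at $t_2$ then uses \eqref{eq:path-terminal} to solve the resulting $2\times 2$ linear system for the true coefficients of $\vb*{t_1^\circ}, \vb*{t_1^\bullet}$, which is non-singular precisely because $\theta(t_1) \neq \theta(t_2)$ makes $\phi(t) - \theta(t) \neq 0$ in $\bF_q$. The final loop combining vertex-twin coefficients via \eqref{eq:def-edge-circ}--\eqref{eq:def-edge-bullet} then assembles $D(B)$.

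Finally, \textsc{RestorePacking} performs a BFS per tree of $\cZ(B^\star)$; by \autoref{lem:base-char} each tree with $|Z \cap T| = 2$ yields exactly one $\cS$-path obtained by tracing back-pointers from the second terminal to the first, and the number of such trees equals $|E[B^\star]| - (n - k)$. By \autoref{prop:pack-base} this is the maximum $\cS$-path packing size, completing the argument. The main obstacle is case (b) of \textsc{ComputeDependence}: one must carefully verify that the entries left undefined during the initial traversal from $t_1$ behave consistently as zero, so that the tentative coefficients genuinely encode the right-hand sides to which the ElseIf fix-up is applied, and that the field division is always valid, which follows from condition \ref{item:one-per-block} of \autoref{lem:indep-char} together with condition \ref{item:terminal-eq} of \autoref{lem:base-char}.
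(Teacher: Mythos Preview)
Your proposal is correct and follows essentially the same route as the paper: the paper's ``proof'' of \autoref{thm:correct} is precisely the running discussion in \autoref{sec:algo}, which verifies each subroutine in turn (\textsc{InitializeBase} via \autoref{lem:base-exist}, the two-case analysis of \textsc{ComputeDependence} via \autoref{lem:base-char}, augmentation via \autoref{prop:aug}, and \textsc{RestorePacking} via \autoref{prop:pack-base}). One small slip: in case~(b) the \textsc{ElseIf} fix-up recovers the true coefficients of $\vb*{t_2^\circ},\vb*{t_2^\bullet}$ (the \emph{second} terminal visited), not of $\vb*{t_1^\circ},\vb*{t_1^\bullet}$; the correct coefficients for $t_1$ are then obtained during the second traversal from $t_2$.
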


%

\subsection{Initializing the feasible base}

The subroutine \textproc{InitializeBase} defined in \autoref{alg:init} obtains the initial feasible base $B\in\cB$ constructed in the proof of \autoref{lem:base-exist}.
We can compute the desired spanning forest by expanding  disjoint vertex sets each of which contains exactly one terminal.

\subsection{Computing the dependence matrix}\label{sec:algo:depmat}

The subroutine \textproc{ComputeDependence} obtains the dependence matrix $D(B)$ for a given feasible base $B$ as defined in \autoref{alg:depmat}.
Here, it clearly suffices to compute the (unique) matrix $C = (c(\vb*{w},\vb*{b}))_{\vb*{w}\in J,\,\vb*{b}\in B}\in\bF_q^{J\times B}$ that enjoys
\begin{align}
  \vb*{w} &= \sum_{\vb*{b}\in B} c(\vb*{w}, \vb*{b})\vb*{b}
  &\forall\vb*{w}\in J,
  \label{eq:def-auxmat}
\end{align}
as the representation of each vector in $L\setminus B$ is obtained from those of at most two vectors in $J$ by definition.

Starting from each terminal chosen as the \textit{root}, this algorithm traverses the forest $G[B]$ while computing the representation of each singleton and vertex-twin with respect to $B$.
Each connected component $Z$ of $G[B]$ is scanned exactly $|Z\cap T| = 2 - |Z\cap T[B]| \in\{1,2\}$ times due to \autoref{lem:base-char}.
The former case, when $Z\cap T = \{t\}$ for some $t$, can be handled simply;
since $Z\cap T[B] = \{t\}$ and thus $\vb*{t}\in B$ hold then, we can compute the representation of twins $v^\circ, v^\bullet$ for each $v\in Z$ during the traversal from the root $t$.

In the latter case when $Z\cap T[B] = \emptyset$ and $Z\cap T = \{t_1, t_2\}$ for some $t_1\neq t_2$, assume $t_1$ is chosen as the root before $t_2$ without loss of generality.
We compute a \textit{tentative} representation of $v^\circ - t_1, v^\bullet - t_1$ for each $v\in Z$ during the first traversal from the root $t_1$.
Before the second traversal from the root $t_2$, the correct representation of the singleton $\vb*{t}_2$ can be now obtained using the tentative representations and Eq.~\eqref{eq:def-terminal-twin}, which finally allows us to propagate the correct representations across $Z$.

\subsection{Augmenting the feasible base}

The subroutine \textproc{AugmentOrMaximum} designates an oracle that takes in the current feasible base $B\in\cB$ coupled with the dependence matrix $D(B)$, and executes the augmenting-path algorithm claimed by \autoref{prop:aug}.
We assume that, when $B$ allows for no augmentation as it has already reached the maximum $E[B]$, \textproc{AugmentOrMaximum} just returns the original feasible base $B$;
otherwise, an updated feasible base is returned, and the loop continues in \autoref{alg:main}.

\subsection{Restoring a $\cS$-path packing}

After all iterations, the subroutine \textproc{RestorePacking} defined in \autoref{alg:restore} recovers the maximum $\cS$-path packing in the same way as we prove the sufficiency in \autoref{prop:pack-base}.
Recall the decomposition we use in the proof;
due to \autoref{lem:indep-char} and \autoref{lem:base-char}, the spanning forest $G[B]$ has $E[B] - n + k$ connected components that contains exactly two terminals from $T\setminus T[B]$ and from distinct blocks, while any other component contains exactly one from $T[B]$.
\autoref{alg:restore} traverses each of the former components from one of the two terminals while creating back pointers, which are then used to construct a $\cS$-path from the other terminal.

\section{Complexity}

In this section, we analyze the time and space complexity of \autoref{alg:main} to prove \autoref{prop:main-time} below. 
We assume the uniform model where any single arithmetic operation over the field $\bF_q$ of size $O(|\cS|)$ is done in constant time.
Recall that $n = |V|$, $m = |E| \ge n - 1$ as $G$ is connected, and $|\cS| \le k = |T| \le n = |V|$.

\begin{lemma}\label{lem:main}
  The main routine in {\rm \autoref{alg:main}} can find the finite field $\bF_q$ in $O(|\cS|^2)$ time.
\end{lemma}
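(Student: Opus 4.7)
The plan is to show existence of the desired prime via Bertrand's postulate and then bound the cost of a direct trial-division search.

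First, I would invoke Bertrand's postulate: for any integer $N \ge 1$ there exists a prime $p$ with $N < p \le 2N$. Applied with $N = |\cS|$, this guarantees that at least one prime lies in the half-open interval $(|\cS|, 2|\cS|]$, so the quantity $q$ sought by \autoref{alg:main} is well-defined and at most $2|\cS|$.

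Next, I would describe the search procedure. Enumerate candidates $q' = |\cS|+1, |\cS|+2, \ldots, 2|\cS|$ in increasing order, and for each candidate test primality by trial division: check whether any integer $d \in \{2, 3, \ldots, \lfloor\sqrt{q'}\rfloor\}$ divides $q'$. The first candidate that passes is the least prime in the interval, which is exactly the $q$ required by the algorithm. Under the uniform-cost model adopted in the paper (where arithmetic on integers of magnitude $O(|\cS|)$ takes constant time), each trial division of a single candidate costs $O(\sqrt{q'}) = O(\sqrt{|\cS|})$ time, and there are at most $|\cS|$ candidates to examine before a prime is found.

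Summing over candidates yields a running time of $O(|\cS|\cdot\sqrt{|\cS|}) = O(|\cS|^{3/2})$, which is comfortably within the claimed $O(|\cS|^2)$ bound; even the crudest variant that trial-divides each candidate by every integer up to $q'-1$ would still fit in $O(|\cS|^2)$. There is essentially no obstacle here: the only subtlety is justifying that a prime is actually encountered within the range, which is exactly what Bertrand's postulate supplies, and confirming that integer arithmetic on values of size $O(|\cS|)$ is $O(1)$ in the uniform cost model already assumed throughout the paper.
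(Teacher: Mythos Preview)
Your proof is correct and follows essentially the same approach as the paper: the paper's own proof is a single sentence noting that one can naively find the least prime greater than $|\cS|$ in $O(|\cS|^2)$ time, with the existence of such a prime in $(|\cS|,2|\cS|]$ already implicit from the setup in \autoref{sec:reduce}. Your write-up simply makes the invocation of Bertrand's postulate and the trial-division bound explicit, and even sharpens the estimate to $O(|\cS|^{3/2})$, but the underlying idea is identical.
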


\begin{proof}
  One can naively find the minimum prime number greater than $|\cS|$ in $O(|\cS|^2)$ time.  
\end{proof}

\begin{lemma}
  The subroutine \textproc{InitializeBase} in {\rm \autoref{alg:init}} runs in $O(n)$ time and $O(n)$ space.
\end{lemma}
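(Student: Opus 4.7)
The plan is to verify two quantities for \textproc{InitializeBase}: the number of loop iterations, and the cost per iteration. The body of \autoref{alg:init} is a multi-source spanning-forest growth starting from the frontier $R = T$, so the natural approach is a BFS/DFS-style argument in which each vertex enters $R$ exactly once.

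First, I would observe that every iteration of the \textbf{while} loop enlarges $R$ by exactly one vertex. Indeed, by definition $e \in E[R, V\setminus R]$ has exactly one endpoint (say $v$) outside $R$, so $R \cup e = R \cup \{v\}$. Combined with the invariant $R \supseteq T$ at initialization and the termination condition $R = V$, this shows the loop runs exactly $n - k$ times. Correspondingly, $B$ receives exactly $n - k$ new lines $\ell_e$, each of constant size, giving $|B| = k + 2(n - k) = O(n)$ at termination.

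Second, I would implement the ``pick any $e \in E[R, V\setminus R]$'' step in constant amortized time by maintaining a frontier queue of candidate edges. Initialization seeds the queue with the edges incident to each $t \in T$; when a new vertex $v$ is added to $R$, we enqueue the edges incident to $v$ and mark $v$ as discovered. Each iteration then dequeues edges until one with a still-undiscovered endpoint appears, uses it, and continues. Since each vertex is marked only once, a standard amortization shows the total number of frontier operations is $O(n)$, matching the $n - k$ productive iterations. The data structures needed---the incidence flags on $V$, the set $R$, the base $B$, and the frontier queue---each occupy $O(n)$ space.

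The main subtlety is justifying that the edge-selection step does not inflate the bound beyond $O(n)$: one must argue that, with the frontier data structure, only the $O(n)$ edges that ever become ``live'' on the spanning forest need be inspected, and that discarded edges (those whose non-$R$ endpoint was already absorbed through another edge) also contribute only $O(n)$ total work because each such edge is charged to the single vertex whose inclusion in $R$ killed it. Once this accounting is in place, summing over the $O(n)$ iterations delivers the $O(n)$ time bound and the $O(n)$ space bound, completing the proof.
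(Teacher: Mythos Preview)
Your amortization for the edge-selection step does not deliver $O(n)$. In the implementation you describe, each time a vertex $v$ enters $R$ you enqueue all edges incident to $v$; summing over all vertices this is $\Theta\bigl(\sum_{v}\deg(v)\bigr) = \Theta(m)$ enqueue operations, and each of these is later dequeued. The sentence ``since each vertex is marked only once, a standard amortization shows the total number of frontier operations is $O(n)$'' conflates vertex marks with edge operations: the standard BFS/DFS amortization yields $O(n+m)$, not $O(n)$. Your charging scheme for discarded edges does not rescue this, because a single vertex $v$ can kill up to $\deg(v)-1$ queued edges at the moment it enters $R$, and summing these charges over all vertices again gives $\Theta(m)$, not $O(n)$.

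The paper's own proof is much terser: it simply asserts that ``one can also avoid scanning any edge in $E\setminus E[B]$ by efficient implementation,'' without spelling out the mechanism. Your queue-based BFS does not have that property, so even granting the paper's assertion, the specific construction you wrote down does not realize it. What you have actually established is the routine $O(n+m)$ bound; this is harmless for the overall $O(mnk)$ running time of \autoref{alg:main}, but it falls short of the $O(n)$ claimed in the lemma.
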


\begin{proof}
  Let $B\in\cB$ be the feasible base finally obtained.
  Each vertex or edge of $G[B]$ is scanned at most once in the traversal.
  One can also avoid scanning any edge in $E\setminus E[B]$ by efficient implementation.
\end{proof}

\begin{lemma}
  The subroutine \textproc{ComputeDependence} in {\rm \autoref{alg:depmat}} runs in $O(mn)$ time and $O(mn)$ space.
\end{lemma}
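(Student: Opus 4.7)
The plan is to partition the work of \textproc{ComputeDependence} into three phases and bound each separately: (P1) the per-root initialization of the singleton coefficients in the \textbf{if}/\textbf{elif} block, (P2) the in-tree propagation of twin coefficients inside the \textbf{while} loop over each connected component of $G[B]$, and (P3) the final assembly of $D$ from the matrix $C$ in the double loop over $(E\setminus E[B])\times B$.

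For (P1), I would observe that inside the outer loop over $t\in T$, the \textbf{elif} branch performs $O(|B|)=O(n)$ work exactly once for each terminal that is the second-visited root of a two-terminal component, while the \textbf{if} branch does $O(1)$. Since $|T|=k\le n$, this contributes $O(nk)=O(n^2)$ in total. The essential ingredient for (P2) is \autoref{lem:base-char}: every connected component $Z$ of the spanning forest $G[B]$ contains exactly one or two terminals, so each edge of $G[B]$ is scanned in the inner traversal at most twice across the whole outer loop, and the guard $\phi(v)\neq\theta(t)$ prevents revisiting a vertex within a single traversal. Each such edge scan costs $O(|B|)=O(n)$ to copy coefficients to the newly discovered twin, plus $O(1)$ to adjust the special coordinates along $\ell_e$. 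Since $|E[B]|\le n-1$, phase (P2) totals $O(n^2)$. Phase (P3) iterates over $(E\setminus E[B])\times B$ with $O(1)$ arithmetic per pair by \eqref{eq:def-edge-circ}--\eqref{eq:def-edge-bullet}, so it runs in $O(mn)$ time. Summing and using $m\ge n-1$ gives the claimed $O(mn)$ time bound.

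For space, I would simply tally the stored tables: the auxiliary matrix $C$ over $J\times B$ has $O(n)\cdot O(n)=O(n^2)$ entries, the output dependence matrix $D$ over $(L\setminus B)\times B$ has $O(m)\cdot O(n)=O(mn)$ entries, and the scalar array $\phi$ and queue $Q$ use $O(n)$. The total is $O(mn)$. The only subtle point, and hence the piece most worth writing carefully, is phase (P2): the bound that each tree edge of $G[B]$ contributes $O(n)$ work \emph{at most twice} rests on \autoref{lem:base-char} capping the terminals per component at two, together with the $\phi$-guard preventing revisits within a single traversal. Everything else is a direct count.
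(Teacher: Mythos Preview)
Your proof is correct and follows essentially the same approach as the paper: bound the tree traversal by noting each vertex/edge of $G[B]$ is scanned at most twice (via \autoref{lem:base-char}) with $O(n)$ work per visit, and observe that the final assembly of $D$ over $(E\setminus E[B])\times B$ dominates at $O(mn)$. One minor slip: the matrix $C$ as defined in the paper is indexed by $J\times B$ with $|J|=2m+2n=O(m)$, not $O(n)$; what makes your $O(n^2)$ space claim valid is that the algorithm only ever stores the rows of $C$ indexed by \emph{vertex}-twins, of which there are $2n$.
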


\begin{proof}
  As described in \autoref{sec:algo:depmat}, each vertex or edge of $G[B]$ is scanned at most twice.
  On visiting each vertex, the representations of its corresponding vertex-twins are coordinate-wise computed in $O(n)$ time.
  Lastly, we compute $O(mn)$ entries of the dependence matrix, which dominates in the total complexity.
\end{proof}

\begin{lemma}
  The loop in {\rm \autoref{alg:main}} iterates at most $\left\lfloor\frac{k}{2}\right\rfloor$ times.
\end{lemma}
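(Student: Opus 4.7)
The plan is to track the quantity $|E[B]|$ across iterations of the loop in \autoref{alg:main}, using the fact that each successful call to \textproc{AugmentOrMaximum} strictly increases $|E[B]|$ by exactly one line, while \autoref{prop:pack-base} gives a tight upper bound on how large $|E[B]|$ can become.

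First, I would pin down the initial value of $|E[B]|$. The subroutine \textproc{InitializeBase} returns the feasible base $B_0\in\cB$ described in the proof of \autoref{lem:base-exist}, which contains all singletons, i.e., $K\subseteq B_0$ and hence $T[B_0]=T$. Plugging $|T[B_0]|=k$ into the base-size identity \eqref{eq:base-size} gives $|E[B_0]|=n-k$. Next, I would invoke \autoref{prop:aug}: each iteration of the loop either terminates (returning the previous base, so the condition $B\neq B'$ fails and the loop exits) or produces a new base with $|E[B]|$ increased by exactly one. So after $r$ successful augmentations the current base $B_r$ satisfies $|E[B_r]|=n-k+r$.

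To bound $r$, I would use \autoref{prop:pack-base}: the existence of a feasible base with $|E[B]|=n-k+r$ is equivalent to the existence of a $\cS$-path packing of size $r$. Since the paths in a $\cS$-path packing are vertex-disjoint and each path consumes two distinct terminals, any such packing has size at most $\lfloor k/2\rfloor$. Therefore $r\le\lfloor k/2\rfloor$, and once this maximum is reached, \textproc{AugmentOrMaximum} must report that no further augmentation is possible, causing the loop to terminate.

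There is essentially no serious obstacle here; the argument is a direct count combining \autoref{lem:base-exist}, equation \eqref{eq:base-size}, \autoref{prop:aug}, and \autoref{prop:pack-base}. The only thing to be mindful of is consistently accounting for the fact that a single augmentation increases $|E[B]|$ by one (as in \autoref{prop:aug}), which corresponds to decreasing $|T[B]|$ by two via \eqref{eq:base-size}, so the loop makes at most one iteration per pair of terminals ``paired up'' into a $\cS$-path.
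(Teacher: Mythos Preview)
Your proof is correct and follows the same increment-counting structure as the paper: start from $|E[B_0]|=n-k$, gain one per iteration via \autoref{prop:aug}, and cap the total. The only difference is in how you obtain the cap. You route through \autoref{prop:pack-base} and the observation that a packing of size $r$ uses $2r\le k$ terminals. The paper instead reads the bound straight off \eqref{eq:base-size}: since $|T[B]|\ge 0$ for every feasible base, $|E[B]|\le (2n-k)/2$, so at most $\lfloor k/2\rfloor$ increments are possible from the starting value $n-k$. Your detour through the packing side is perfectly valid but unnecessary; the purely linear-algebraic bookkeeping already suffices, which is why the paper cites only \autoref{prop:aug} and \eqref{eq:base-size}.
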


\begin{proof}
  The claim follows from \autoref{prop:aug} and Eq.~\eqref{eq:base-size}.
\end{proof}

\begin{lemma}\label{lem:restore}
  The subroutine \textproc{RestorePacking} in {\rm \autoref{alg:restore}} runs in $O(n)$ time and $O(n)$ space.
\end{lemma}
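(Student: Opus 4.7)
The plan is to charge the work of \textproc{RestorePacking} to the vertices and edges of the spanning forest $G[B]$, separating the two branches of the main \texttt{for} loop. By \autoref{lem:base-char}, each connected component $Z\in\cZ(B)$ satisfies $|Z\cap T|+|Z\cap T[B]|=2$, so either $Z$ contains a single terminal that lies in $T[B]$, or $Z$ contains exactly two terminals, both in $T\setminus T[B]$. Only the latter components interact with the outer loop, which ranges over $T\setminus T[B]$, and each such component contributes exactly one entry to the returned packing~$\mathcal{P}$.

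First I would handle the exploration branch (the \texttt{if} clause). When a terminal $t\in T\setminus T[B]$ is encountered with $\psi(t)=\emptyset$, the code performs a BFS over its component $Z$ in $G[B]$, setting back pointers $\psi(v)$ for every $v\in Z$. Since $G[B]$ is a forest, this BFS examines each vertex in $Z$ once and each edge in $E[Z]\cap E[B]$ a constant number of times. Crucially, the guard $\psi(t)=\emptyset$ ensures that the BFS is triggered at most once per component, so summing over all components yields $O(|V|+|E[B]|)=O(n)$ total work for this branch.

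Next I would handle the tracing branch (the \texttt{else} clause). This fires exactly for the second terminal of each two-terminal component $Z$ (without loss of generality, the one reached later in the enumeration order of $T\setminus T[B]$). The \texttt{while} loop walks from $v_0$ back along the $\psi$ pointers to the root, which sits at the other terminal of $Z$, and records the vertices of the resulting tuple. Each step does $O(1)$ work, and the length $k+1$ of the traced path is at most $|Z|$. Because the components are vertex-disjoint, the sum of these path lengths is at most $n$, so the total work in this branch is also $O(n)$.

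For space, the mapping $\psi:V\to V\cup\{\emptyset\}$ requires $O(n)$ words, the queue $Q$ contains at most $|V|$ entries, and $\mathcal{P}$ stores vertex-disjoint paths whose total length is at most $n$. No step allocates anything larger, so the overall space is $O(n)$. The only subtle point, which I would emphasize but which is not a real obstacle, is the accounting argument that the guard $\psi(t)=\emptyset$ together with the initialization $\psi(t)\gets t$ at the root makes each component trigger the BFS exactly once and the trace exactly once, so no vertex or edge of $G[B]$ is touched more than a constant number of times.
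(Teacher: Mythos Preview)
Your proof is correct and follows the same approach as the paper: charge the traversal to the vertices and edges of the forest $G[B]$ (each touched a constant number of times), and bound the path-tracing by vertex-disjointness. The paper's own proof says exactly this in two sentences; your version simply spells out the BFS/trace accounting and the role of the guard $\psi(t)=\emptyset$ in more detail.
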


\begin{proof}
  Each vertex or edge in $G[B]$ is scanned at most once in the traversal.
  Constructing the $\cS$-paths takes $O(n)$ time and $O(n)$ space in total as those paths are vertex-disjoint.
\end{proof}

\begin{proposition}\thlabel{prop:main-time}
  The main routine {\rm \autoref{alg:main}} runs in $O(mnk)$ time and $O(mn)$ space.
\end{proposition}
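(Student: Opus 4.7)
The plan is to simply sum the time and space bounds proved in the preceding lemmas, since Proposition \thref{prop:main-time} is just the composition of the costs of each subroutine called by \autoref{alg:main}. I would first list the four phases of the main routine: (i) locating the prime $q$, (ii) \textproc{InitializeBase}, (iii) the augmentation loop consisting of \textproc{ComputeDependence} followed by \textproc{AugmentOrMaximum}, and (iv) \textproc{RestorePacking}. Each of these has already been bounded, so the argument is essentially arithmetic.

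For the time bound, I would combine the lemmas as follows: phase (i) costs $O(|\cS|^2)$; phase (ii) costs $O(n)$; each iteration of the loop costs $O(mn)$ by the lemma for \textproc{ComputeDependence} and by \autoref{prop:aug}, and there are at most $\lfloor k/2\rfloor$ iterations, giving a loop cost of $O(mnk)$; phase (iv) costs $O(n)$. Since $|\cS| \le k \le n \le m+1$, we have $|\cS|^2 \le k^2 \le mnk$, so phase (i) is absorbed; phases (ii) and (iv) are also dominated by $O(mnk)$. The total time is therefore $O(mnk)$.

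For the space bound, I would observe that only one iteration's data needs to be kept in memory at any time. The dependence matrix $D(B)$ stored across one iteration has $|L \setminus B|\cdot|B| = O(mn)$ entries, which dominates all other data structures: the input graph needs $O(n+m)=O(m)$ space, the feasible base $B \subseteq J$ has $O(n)$ elements, and the intermediate structures in \textproc{InitializeBase} and \textproc{RestorePacking} each fit in $O(n)$ space by their own lemmas. Summing gives $O(mn)$ space.

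The only subtlety (and the ``hard part'', though it is minor) is to confirm that the field-search cost in \autoref{lem:main} and the small $O(n)$ terms from the non-loop phases are safely absorbed by the loop's $O(mnk)$ cost; this follows from the standing assumptions $|\cS| \le k \le n$ and $n-1 \le m$ stated at the start of the section. No other argument is needed, and the final statement of \autoref{prop:main-time} follows.
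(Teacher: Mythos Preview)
Your proposal is correct and follows essentially the same approach as the paper: the paper's own proof is a one-line citation of the complexity lemmas for each phase together with \autoref{prop:aug}, and you have simply spelled out the arithmetic of summing those bounds and verifying that the non-loop phases are dominated by $O(mnk)$.
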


\begin{proof}
  The claim follows from Lemmas \ref{lem:main}--\ref{lem:restore} as well as \autoref{prop:aug}.
\end{proof}

Finally, Propositions \ref{thm:correct} and \ref{prop:main-time} together derive \autoref{thm:main}.

\bibliographystyle{plain}
\bibliography{references}

\end{document}